\newtheorem{proposition}{Proposition}
\newtheorem{corollary}{Corollary}
\newtheorem{defin}{\bf Definition}
\newenvironment{proof}{\noindent{\bf Proof}}{$\diamond$}
\def\be{\mbox{Be}}
\def\bin{\mbox{Bin}}
\def\un{\mbox{Un}}
\def\bebin{\mbox{BeBin}}
\def\Cor{\mbox{Corr}}
\def\sbe{\mbox{SBeP}}
\def\d{\mbox{d}}
\def\rest{\mbox{rest}}
\def\bc{{\bf c}}
\def\bu{{\bf u}}
\def\bv{{\bf v}}
\def\bX{{\bf X}}
\def\bY{{\bf Y}}
\newcommand{\bfeta}{\boldsymbol{\eta}}
\newcommand{\btheta}{\boldsymbol{\theta}}
\newcommand{\Naa}{{\rm I}\!{\rm N}}
\newcommand{\RR}{\mathbb{R}}
\begin{document}

\baselineskip=24pt

\title{\bf Generalised Bayesian sample copula\\ of order $m$}

\author[$\dagger$]{Luis E. Nieto-Barajas}
\author[$\ddagger$]{Ricardo Hoyos-Arg\"uelles}

\affil[$\dagger$]{\small{Department of Statistics, ITAM, Mexico}}
\affil[$\ddagger$]{Direction of Financial System Information, Banco de M\'exico, Mexico}

\date{}

\maketitle

\begin{abstract}
In this work we propose a semiparametric bivariate copula whose density is defined by a piecewise constant function on disjoint squares. We obtain the maximum likelihood estimators of model parameters and prove that they reduce to the sample copula under specific conditions. We further propose to carry out a full Bayesian analysis of the model and introduce a spatial dependent prior distribution for the model parameters. This prior allows the parameters to borrow strength across neighbouring regions to produce smooth posterior estimates. To characterise the posterior distribution, via the full conditional distributions, we propose a data augmentation technique. A Metropolis-Hastings step is required and we propose a novel adaptation scheme for the random walk proposal distribution. We implement a simulation study and an analysis of a real dataset to illustrate the performance of our model and inference algorithms.
\end{abstract}

\noindent {\sl Keywords}: Copula theory, Bayesian inference, piecewise constant, spatial model.

\noindent {\sl AMS Classification}: 62H05 $\cdot$ 62G05 $\cdot$ 62H11.

\section{Introduction}
\label{sec:intro}

Let $(X,Y)$ be a bivariate random vector with joint cumulative distribution function (CDF) $H(x,y)$ and marginal CDFs $F(x)$ and $G(y)$, respectively. According to \cite{sklar:59}, there exists a copula function $C(u,v)$ with $C:[0,1]^2\to[0,1]$ that satisfies the conditions to be a proper CDF with uniform marginals, such that $H(x,y)=C(F(x),G(y))$.  

Dependence or association measures between the two random variables $(X,Y)$, independently of their marginal distributions, can be entirely written in terms of the copula. For instance Kendall's $\tau$ and Spearman's $\rho$ are given by 
\begin{equation}
\label{eq:tau-rho}
\tau=4\int_0^1\int_0^1 C(u,v)f_C(u,v)\d u \d v-1\quad\mbox{and}\quad\rho=12\int_0^1\int_0^1 uvf_C(u,v)\d u\d v-3,
\end{equation}
respectively, where $f_C(u,v)$ is the corresponding copula density \citep[e.g.][]{nelsen:06}. Therefore, our interest lies on estimating the copula, either the CDF $C(u,v)$ or the density $f_C(u,v)$. 

Nonparametric estimation of copulas was first proposed by \citep{deheuvels:79} who introduced the empirical copula based on the multivariate empirical distribution on the marginal empirical distributions. Later, \cite{fermanian&al:04} studied weak convergence properties of the empirical copula. Smoother estimators were also proposed based on kernels \citep[e.g.][]{fermanian&scaillet:03}. For example \cite{chen&huang:07} proposed a bivariate kernel copula based on local linear kernels that is everywhere consistent on $[0,1]^2$. In a Bayesian perspective, \cite{hoyos&nieto:20} proposed a nonparametric estimator of the generator, in an Archimedean copula, based on quadratic splines. 

Recently, as a generalisation of the empirical copula, \cite{gonzalez&hoyos:18} introduced a sample copula of order $m$ based on a modified rank transformation of the data. Since our model is closely related to the sample copula, we review it here in detail.  
Let $(X_i,Y_i)$, $i=1,\ldots,n$, be a bivariate sample with support $\Omega\subset\RR^2$. Based on the probability integral transformation and using the empirical CDF of each coordinate, the modified rank transformation \citep{deheuvels:79} $(U_i,V_i)$ is defined as 
\begin{equation}
\label{eq:rank}
U_i=\mbox{rank}(i,\bX)/n\quad\mbox{and}\quad V_i=\mbox{rank}(i,\bY)/n,
\end{equation}
where $\mbox{rank}(i,\bX)=k$ if and only if $X_i=X_{(k)}$ for $i,k=1,\ldots,n$. The modified sample $(U_i,V_i)$, $i=1,\ldots,n$ has support in $[0,1]^2$, but contains all relevant information in the data to characterise the copula (dependence). In particular, the original sample $(X_i,Y_i)$ and the modified rank transformed sample $(U_i,V_i)$ produce exactly the same sample Kendall's tau and Spearman's rho coefficients \citep[e.g][]{nelsen:06}. 

Now, independently of the data, let us consider a uniform partition of size $m$, $2\leq m\leq n$, for each of the two coordinates in $[0,1]$. Then $\{Q_{j,k},\:j,k=1,\ldots,m\}$ defines a partition of size $m^2$ of $[0,1]^2$ such that 
\begin{equation}
\label{eq:partition}
Q_{j,k}=\left(\frac{j-1}{m},\frac{j}{m}\right]\times\left(\frac{k-1}{m},\frac{k}{m}\right] 
\end{equation}
is the region in the unitary square formed by the cross product of intervals $j^{th}$ and $k^{th}$ of the first and second coordinate, respectively, for $j,k=1,\ldots,m$. To illustrate, Figure \ref{fig:part} depicts a partition with $m=5$. 
Let $r_{j,k}$ be the number of modified sample points belonging to region $Q_{j,k}$, in notation 
\begin{equation}
\label{eq:r}
r_{j,k}=\sum_{i=1}^n I((u_i,v_i)\in Q_{j,k}),
\end{equation}
for $j,k=1,\ldots,m$ such that $\sum_{j=1}^m\sum_{k=1}^m r_{j,k}=n$. Then the sample copula density of order $m$ is defined as 
\begin{equation}
\label{eq:samplec}
f_S(u,v\mid\btheta)=(m^2/n)\sum_{j=1}^m\sum_{k=1}^m r_{j,k}I\left((u,v)\in Q_{j,k}\right).
\end{equation}
Further properties of this sample copula were studied in \cite{gonzalez&hoyos:21}. 

In this paper we propose a semiparametric copula model whose maximum likelihood estimator coincides, under certain conditions, to the sample copula of order $m$. We further propose a Bayesian approach for inference purposes and introduce a novel prior that borrows strength across some neighbouring regions in the space and produces smooth estimates. Posterior inference is obtained via a Markov Chain Monte Carlo algorithm that relies in a data augmentation technique and requires a Metropolis-Hastings step. We propose a novel adaptation scheme for the random walk proposal distributions.

The outline of the rest of the paper is as follows: In Section \ref{sec:model} we define the semiparametric copula model and obtain the maximum likelihood estimators of the model parameters. In Section 3 we introduce the spatially dependent prior and study its properties. Section 4 characterises posterior distributions and deals with posterior computations. In Section 5 we present a simulation study to show the performance of our model under different scenarios and carry out a real data analysis. We conclude in Section 6 with a discussion.

\section{Model}
\label{sec:model}

Let us consider a uniform partition of size $m^2$, $2\leq m\leq n$, of $[0,1]^2$ as in \eqref{eq:partition}.  We define a semiparametric copula density of the form
\begin{equation}
\label{eq:dens}
f_C(u,v\mid\btheta)=m^2\sum_{j=1}^m\sum_{k=1}^m\theta_{j,k}I\left((u,v)\in Q_{j,k}\right),
\end{equation}
where $\btheta=\{\theta_{j,k},\:j,k=1\ldots,m\}$ are the set of model parameters that satisfy the following conditions: 
\begin{equation}
\label{eq:cond1}
\sum_{j=1}^m\theta_{j,k}=\sum_{k=1}^m\theta_{j,k}=\frac{1}{m}\quad\text{and}\quad\sum_{j=1}^m\sum_{k=1}^m\theta_{j,k}=1.
\end{equation}

Our semiparametric model \eqref{eq:dens} can be seen as a bivariate probability histogram with $m^2$ number of bins, and \eqref{eq:cond1} are the required conditions such that the marginal induced densities are uniform and the bivariate density is proper, respectively. Note that \eqref{eq:dens} resembles the sample copula \eqref{eq:samplec}, however, \eqref{eq:dens} is a parametrised model, whereas \eqref{eq:samplec} is a nonparametric estimator of a bivariate copula. 

Conditions \eqref{eq:cond1} constrain the parameter space $\Theta$ leaving us with a reduced number of parameters. That is, instead of having $m^2$ parameters, we end up having $(m-1)^2$ free parameters,  $\{\theta_{j,k}\}$ for $j,k=1,\ldots,m-1$, where the boundary parameters are defined as 
\begin{equation}
\label{eq:cond2}
\theta_{j,m}=\frac{1}{m}-\sum_{k=1}^{m-1}\theta_{j,k}\,,\quad\theta_{m,k}=\frac{1}{m}-\sum_{j=1}^{m-1}\theta_{j,k}\quad\mbox{and}\quad
\theta_{m,m}=\sum_{j=1}^{m-1}\sum_{k=1}^{m-1}\theta_{j,k}-\frac{m-2}{m}
\end{equation}
for $j,k=1,\ldots,m-1$. In this case, the parameter space $\Theta$ is defined by the following constraints for the free parameters 
\begin{equation}
\label{eq:cond3}
0<\sum_{j=1}^{m-1}\theta_{j,k}<\frac{1}{m},\;\forall k,\quad 0<\sum_{k=1}^{m-1}\theta_{j,k}<\frac{1}{m},\;\forall j\quad\text{and
}\quad \frac{m-2}{m}<\sum_{j=1}^{m-1}\sum_{k=1}^{m-1}\theta_{j,k}<\frac{m-1}{m}.
\end{equation}

The corresponding copula can be obtained as the CDF of the copula density \eqref{eq:dens}. This has the expression
\begin{equation}
\label{eq:copula}
C(u,v\mid\btheta)=\sum_{j=1}^m\sum_{k=1}^m\left(A_{j,k}+B_{j,k}u+D_{j,k}v+m^2\theta_{j,k}uv\right)I\left((u,v)\in Q_{j,k}\right),
\end{equation}
where 
$$A_{j,k}=\sum_{r=1}^j\sum_{s=1}^k\theta_{r,s}-j\sum_{s=1}^k\theta_{j,s}-k\sum_{r=1}^j\theta_{r,k}+jk\theta_{j,k},$$ $$B_{j,k}=m\sum_{s=1}^k\theta_{j,s}-mk\theta_{j,k}\quad\mbox{and}\quad D_{j,k}=m\sum_{r=1}^j\theta_{r,k}-mj\theta_{j,k}.$$ 

Although the copula density \eqref{eq:dens} is piecewise constant, the corresponding copula \eqref{eq:copula} is absolutely continuous. Moreover, Spearman's $\rho$ coefficient has a simple expression
\begin{equation}
\label{eq:rho}
\rho(\btheta)=\frac{3}{m^2}\left\{4\sum_{j=1}^m\sum_{k=1}^m jk\theta_{j,k}-(m+1)^2\right\}.
\end{equation}  
A similar expression as \eqref{eq:rho} was obtained by \cite{gonzalez&hernandez:13} for the sample copula. 

To establish a connection with the sample copula, we provide the maximum likelihood estimators of the model parameters, which are given in the following Proposition. 
\begin{proposition}
\label{prop:mle}
Let $(U_i,V_i)$, $i=1,\ldots,n$ be a bivariate sample of size $n$ from copula density \eqref{eq:dens}. The maximum likelihood estimators (MLEs) $\widehat{\theta}_{i,j}$ of the parameters $\theta_{j,k}$, for $j,k=1,\ldots,m-1$, satisfy
\begin{equation}
\label{eq:mle}
\frac{r_{j,k}}{\widehat\theta_{j,k}}+\frac{r_{m,m}}{\sum_{t=1}^{m-1}\sum_{s=1}^{m-1}\widehat\theta_{t,s}-(m-2)/m}=\frac{r_{j,m}}{1/m-\sum_{s=1}^{m-1}\widehat\theta_{j,s}}+\frac{r_{m,k}}{1/m-\sum_{t=1}^{m-1}\widehat\theta_{t,k}}
\end{equation}
where $r_{j,k}$, for $j,k=1,\ldots,m$, are given in \eqref{eq:r}. 
\end{proposition}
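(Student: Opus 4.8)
The plan is to treat this as a constrained multinomial maximum-likelihood problem. Since the density \eqref{eq:dens} is constant on each cell $Q_{j,k}$ with value $m^2\theta_{j,k}$, the $r_{j,k}$ sample points falling in $Q_{j,k}$ contribute a factor $(m^2\theta_{j,k})^{r_{j,k}}$, so the likelihood factorises as $L(\btheta)=\prod_{j=1}^m\prod_{k=1}^m(m^2\theta_{j,k})^{r_{j,k}}$. Dropping the additive constant coming from $m^2$, the function to be maximised is the log-likelihood
$$\ell(\btheta)=\sum_{j=1}^m\sum_{k=1}^m r_{j,k}\log\theta_{j,k}.$$
This is precisely the multinomial log-likelihood with cell probabilities $\theta_{j,k}$ (consistent with $\int_{Q_{j,k}}f_C=\theta_{j,k}$), subject to the marginal-uniformity constraints \eqref{eq:cond1}.

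First I would remove the constraints by expressing $\ell$ through the $(m-1)^2$ free parameters only, substituting the boundary parameters from \eqref{eq:cond2}. This splits the double sum into four groups, namely the interior block $j,k\le m-1$, the last column $\theta_{j,m}$, the last row $\theta_{m,k}$, and the corner $\theta_{m,m}$:
$$\ell=\sum_{j=1}^{m-1}\sum_{k=1}^{m-1}r_{j,k}\log\theta_{j,k}+\sum_{j=1}^{m-1}r_{j,m}\log\theta_{j,m}+\sum_{k=1}^{m-1}r_{m,k}\log\theta_{m,k}+r_{m,m}\log\theta_{m,m}.$$

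Next I would differentiate with respect to a fixed free parameter $\theta_{j,k}$, $j,k\le m-1$. The key observation is that $\theta_{j,k}$ enters $\ell$ in exactly four places: directly in the interior term, and implicitly through $\theta_{j,m}$, $\theta_{m,k}$ and $\theta_{m,m}$. From \eqref{eq:cond2} one reads off $\partial\theta_{j,m}/\partial\theta_{j,k}=-1$, $\partial\theta_{m,k}/\partial\theta_{j,k}=-1$ and $\partial\theta_{m,m}/\partial\theta_{j,k}=+1$, while all remaining boundary terms (different row or column) are free of $\theta_{j,k}$. The chain rule then gives
$$\frac{\partial\ell}{\partial\theta_{j,k}}=\frac{r_{j,k}}{\theta_{j,k}}-\frac{r_{j,m}}{\theta_{j,m}}-\frac{r_{m,k}}{\theta_{m,k}}+\frac{r_{m,m}}{\theta_{m,m}}.$$
Setting this to zero, moving the two negative terms to the right-hand side, and re-expanding $\theta_{j,m},\theta_{m,k},\theta_{m,m}$ through \eqref{eq:cond2} yields the stated system \eqref{eq:mle}.

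The computation is routine once the substitution is in place; the only step where care is genuinely needed — and the sole place an error can creep in — is the sign bookkeeping in the chain rule, since each free parameter appears with coefficient $-1$ in the last-row and last-column constraints but $+1$ in the corner constraint. Finally I would confirm that the stationary point is indeed the maximiser: each summand of $\ell$ is $\log$ of an affine function of the free parameters, hence concave, so $\ell$ is concave on the convex region \eqref{eq:cond3}; an interior critical point is therefore a global maximum, and positivity of the counts keeps the solution in the interior where all the logarithms are well defined.
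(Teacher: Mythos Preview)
Your proposal is correct and follows essentially the same route as the paper: write the log-likelihood in terms of the $(m-1)^2$ free parameters via \eqref{eq:cond2}, differentiate to obtain $\frac{r_{j,k}}{\theta_{j,k}}-\frac{r_{j,m}}{\theta_{j,m}}-\frac{r_{m,k}}{\theta_{m,k}}+\frac{r_{m,m}}{\theta_{m,m}}=0$, and re-expand the boundary parameters. The only difference is in the second-order check: the paper verifies that the diagonal second partials are negative, whereas your concavity argument (each term is the logarithm of an affine function) is in fact tighter, since negativity of the diagonal entries alone does not guarantee a maximum in several variables.
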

\begin{proof}
Given the observed sample $\bu=\{u_i\}$ and $\bv=\{v_i\}$, the log-likelihood function for $\btheta$ is given by
$$\hspace{-3.5cm}\log f(\bu,\bv\mid\btheta) = 2n\log(m)+\sum_{j=1}^{m-1}\sum_{k=1}^{m-1} r_{j,k}\log\left(\theta_{j,k}\right)$$
$$\hspace{3cm}+\sum_{j=1}^{m-1}r_{j,m}\log(\theta_{j,m})+ \sum_{k=1}^{m-1}r_{m,k}\log(\theta_{m,k})+ r_{m,m}\log(\theta_{m,m}),$$
where $\theta_{j,m}$, $\theta_{m,k}$ and $\theta_{m,m}$ are given in \eqref{eq:cond2}. We take first derivative with respect to $\theta_{j,k}$ and obtain
$$\frac{\partial}{\partial\theta_{j,k}}\log f=\frac{r_{j,k}}{\theta_{j,k}}-\frac{r_{j,m}}{\theta_{j,m}}-\frac{r_{m,k}}{\theta_{m,k}}+\frac{r_{m,m}}{\theta_{m,m}},$$
for $j,k=1,\ldots,m-1$. After equating the first derivative to zero we obtain condition \eqref{eq:mle}. To prove that the critical point is a maximum we further take second derivative and obtain 
$$\frac{\partial^2}{\partial\theta_{j,k}^2}\log f=-\frac{r_{j,k}}{\theta_{j,k}^2}-\frac{r_{j,m}}{\theta_{j,m}^2}-\frac{r_{m,k}}{\theta_{m,k}^2}-\frac{r_{m,m}}{\theta_{m,m}^2}$$
which is clearly negative. 
\end{proof}

Proposition \ref{prop:mle} provides conditions to obtain the MLEs of parameters $\theta_{j,k}$. However these conditions rely on nonlinear equations. For the specific case of $m=2$, we can obtain explicit analytic expressions. Condition \eqref{eq:mle} simplifies to $$\frac{r_{1,1}}{\widehat\theta_{1,1}}+\frac{r_{2,2}}{\widehat\theta_{1,1}}=\frac{r_{1,2}}{1/2-\widehat\theta_{1,1}}+\frac{r_{2,1}}{1/2-\widehat\theta_{1,1}},$$
which after some algebra we obtain $$\widehat\theta_{1,1}=\frac{r_{1,1}+r_{2,2}}{n}.$$
This is an interesting result since the MLE of the unique parameter of the model (when $m=2$), $\theta_{1,1}$, is an average of the number of points that lie in opposite regions $Q_{1,1}$ and $Q_{2,2}$. 

In practice we do not observe data directly from the copula, that is, $(U_i,V_i)$, $i=1,\ldots,n$ with support in $[0,1]^2$. What we usually observe are data $(X_i,Y_i)$, $i=1,\ldots,n$, with support $\Omega\subset\RR^2$ coming from the CDF $H(x_i,y_i)=C(F(x_i),G(y_i))$, with $F$ and $G$ the marginal CDFs of each coordinate, respectively. However, as in \cite{deheuvels:79}, we can obtain a modified sample $(U_i,V_i)$ using the modified rank transformation \eqref{eq:rank}, to estimate the copula. In this case we have another interesting case from Proposition \ref{prop:mle}, whose result is given in the following corollary. 

\begin{corollary}
Let $(U_i,V_i)$, $i=1,\ldots,\ldots,n$ be a bivariate modified rank transformed sample of size $n$ for data $(X_i,Y_i)$ coming from copula density \eqref{eq:dens} or coming from CDF $H(x_i,y_i)=C(F(x_i),G(y_i))$. Additionally, if $m$ divides $n$, the MLEs of the parameters  $\theta_{j,k}$, reduce to $$\widehat\theta_{j,k}=\frac{r_{j,k}}{n},$$ for $j,k=1,\ldots,m-1$, where $r_{j,k}$ is given in \eqref{eq:r}. Furthermore, the MLE of the copula density $\widehat{f}_C(u,v\mid\btheta)=f_C(u,v\mid\widehat\btheta)$, reduces to the sample copula \eqref{eq:samplec} of \cite{gonzalez&hoyos:18}. 
\end{corollary}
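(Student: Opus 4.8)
The plan is to avoid any fresh optimisation: I would verify that the proposed estimator $\widehat\theta_{j,k}=r_{j,k}/n$ satisfies the first-order conditions \eqref{eq:mle} and then invoke the concavity already established in Proposition \ref{prop:mle}, which guarantees that a critical point is the global maximiser. The conceptual crux is a single structural fact about the modified rank transformation, so I would establish that first. Because \eqref{eq:rank} assigns the ranks $1,\ldots,n$ to each coordinate, the transformed values $\{u_i\}$ are exactly $\{1/n,\ldots,n/n\}$, and likewise for $\{v_i\}$, so each marginal is perfectly uniform on the grid. When $m$ divides $n$, the first-coordinate interval $((j-1)/m,j/m]$ then contains exactly $n/m$ of the $u_i$, which forces the row and column totals of the count matrix to be constant:
\begin{equation*}
\sum_{k=1}^m r_{j,k}=\frac{n}{m}\ \ \forall j,\qquad \sum_{j=1}^m r_{j,k}=\frac{n}{m}\ \ \forall k.
\end{equation*}
This identity is the engine of the whole argument.

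Next I would check feasibility and consistency of the candidate. Dividing the row and column identities by $n$ shows at once that $\widehat\theta_{j,k}=r_{j,k}/n$ meets the marginal constraints \eqref{eq:cond1}. I would then confirm that the boundary definitions \eqref{eq:cond2} evaluated at the candidate reproduce the remaining sample proportions, namely $\theta_{j,m}=r_{j,m}/n$, $\theta_{m,k}=r_{m,k}/n$ and $\theta_{m,m}=r_{m,m}/n$. The first two follow immediately from the constant row and column totals. The corner cell requires a short inclusion–exclusion count, writing $\sum_{j,k=1}^{m-1}r_{j,k}=n-2n/m+r_{m,m}$ and substituting into the definition of $\theta_{m,m}$, whereupon the constant $1-2/m-(m-2)/m=0$ cancels and leaves $r_{m,m}/n$.

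With every denominator in \eqref{eq:mle} now identified as a sample proportion, the verification is immediate: substituting $\widehat\theta_{j,k}=r_{j,k}/n$ collapses each of the four quotients to $n$, so both sides of \eqref{eq:mle} equal $2n$ and the first-order condition holds. Since the second-derivative computation in Proposition \ref{prop:mle} shows the log-likelihood is strictly concave in the free parameters, this critical point is the global maximiser, establishing $\widehat\theta_{j,k}=r_{j,k}/n$ as the MLE. I expect the only real friction to be the bookkeeping for the corner cell $\theta_{m,m}$ and a mild regularity caveat: I would flag that all relevant cell counts should be positive so the maximiser lies in the interior $\Theta$ described by \eqref{eq:cond3} rather than on its boundary.

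Finally, for the copula statement I would simply substitute $\widehat\theta_{j,k}=r_{j,k}/n$ into the density \eqref{eq:dens}, obtaining $m^2\sum_{j,k}(r_{j,k}/n)\,I\!\left((u,v)\in Q_{j,k}\right)$, which is exactly $f_S(u,v\mid\btheta)$. Since the copula \eqref{eq:copula} is the CDF of \eqref{eq:dens}, equality of the densities gives equality of the CDFs on $[0,1]^2$, so $\widehat C(u,v\mid\btheta)=C(u,v\mid\widehat\btheta)$ coincides with the sample copula of \cite{gonzalez&hoyos:18}, completing the argument.
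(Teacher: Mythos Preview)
Your proposal is correct and follows essentially the same route as the paper: both arguments use the constant row/column totals $\sum_{j} r_{j,k}=\sum_{k} r_{j,k}=n/m$ (valid for rank-transformed data when $m\mid n$) to show that the boundary expressions in \eqref{eq:cond2} evaluate to $r_{j,m}/n$, $r_{m,k}/n$ and $r_{m,m}/n$, whence each quotient in \eqref{eq:mle} collapses to $n$ and the first-order condition reads $2n=2n$. Your write-up is in fact more careful than the paper's, since you spell out the inclusion--exclusion for the corner cell, explicitly invoke the concavity from Proposition~\ref{prop:mle} to promote the critical point to a global maximiser, and flag the interior-point caveat; the paper's proof states these steps more tersely or leaves them implicit.
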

\begin{proof}
For modified rank transformed data and when $m$ divides $n$ the following marginal conditions are satisfied: $\sum_{j=1}^m r_{j,k}=\sum_{k=1}^m r_{j,k}=\frac{n}{m}$. We only need to prove that $\widehat\theta_{j,k}=\frac{r_{j,k}}{n}$ for $j,k=1,\ldots,m-1$ satisfy condition \eqref{eq:mle}. Working on the boundary elements and considering the hypothesis,  $\sum_{t=1}^{m-1}\sum_{s=1}^{m-1}\widehat\theta_{t,s}-(m-2)/m$ becomes $r_{m,m}/n$; $1/m-\sum_{s=1}^{m-1}\widehat\theta_{j,s}$ becomes $r_{j,m}/n$; and $1/m-\sum_{t=1}^{m-1}\widehat\theta_{t,k}$ becomes $r_{m,k}/n$. Substituting these values into \eqref{eq:mle} we obtain that $2n=2n$ which is clearly true. 
\end{proof}

To carry out inference on the model parameters, we suggest to follow a Bayesian approach instead.

\section{Prior distributions}
\label{sec:prior}

Here we propose a prior that recognises dependence across $\theta_{j,k}$'s that belong to neighbouring regions $Q_{j,k}$ in the partition grid. The most common prior for spatial dependence in areas is the conditionally autoregressive model \citep{besag:74}, however this model is defined in terms of normal distributions and the marginal support is the real line. In our case the parameter space for each $\theta_{j,k}$ is a subset of the interval $[0,1]$, so we propose an alternative prior that extends the work of \cite{jara&al:13} and uses ideas from \cite{nieto&bandyopadhyay:13}. 

Let $\partial_{j,k}$ be the set of indexes of spatial neighbours of region $Q_{j,k}$, $j,k=1,\ldots,m$. Since all regions are rectangles, for the purpose of this work, two regions will be considered neighbours if they share an edge, for instance, region defined by indexes $(j,k)$ will have a set of neighbours $\partial_{j,k}=\{(j,k),(j,k-1),(j,k+1),(j-1,k),(j+1,k)\}$. This is illustrated in Figure \ref{fig:part} with grey shadows showing the neighbouring regions of location $(3,3)$. Note that a region is considered a neighbour of itself and that regions $(j,k)$ located at the boundaries of the grid will have less than five neighbours.

Instead of defining the dependence directly on the $\{\theta_{j,k}\}$, we will rely on a set of latent parameters $\{\eta_{j,k}\}$ associated to each of the regions $(j,k)$ in the grid. This latter set will be conditionally independent given a common parameter $\omega$. Therefore, our spatial dependence prior is based on conjugate distributions, in a Bayesian context, and is defined through a three-level hierarchical model of the form
\begin{align}
\nonumber
\theta_{j,k}\mid\bfeta&\stackrel{\mbox{\scriptsize{ind}}}{\sim}\be\left(a+\sum_{(r,s)\in\partial_{j,k}}\eta_{r,s}\,,\:b+\sum_{(r,s)\in\partial_{j,k}}\left(c_{r,s}-\eta_{r,s}\right)\right)\\
\label{eq:prior}
\eta_{j,k}\mid\omega&\stackrel{\mbox{\scriptsize{ind}}}{\sim}\bin\left(c_{j,k}\,,\:\omega\right)\\
\nonumber
\omega&\sim\be(a,b),
\end{align}
where $a,b>0$ and $c_{j,k}\in\Naa$, for $j,k=1,\ldots,m-1$. We will refer to prior \eqref{eq:prior} as spatial beta process and will denote it by $\sbe(a,b,\bc)$, where $\bc=\{c_{j,k}\}$. 

The reason for requiring a three level hierarchical model becomes clear when we study its properties. In Particular, the marginal distribution induced for each $\theta_{j,k}$ and the correlation between any two of them are given in the following proposition. 

\begin{proposition}
\label{prop:cor}
Let $\btheta=\{\theta_{j,k}\}\sim\sbe(a,b,\bc)$ given in \eqref{eq:prior}. Then, $\theta_{j,k}\sim\be(a,b)$ marginally for all $j,k=1,\ldots,m-1$. Moreover, the correlation between any two parameters, say $\theta_{j,k}$ and $\theta_{j',k'}$ is given by 
$$\Cor\left(\theta_{j,k},\theta_{j',k'}\right)=\frac{(a+b)\left(\sum_{(r,s)\in\partial_{j,k}\cap\partial_{j',k'}}c_{r,s}\right)+\left(\sum_{(r,s)\in\partial_{j,k}}c_{r,s}\right)\left(\sum_{(r,s)\in\partial_{j',k'}}c_{r,s}\right)}{\left(a+b+\sum_{(r,s)\in\partial_{j,k}}c_{r,s}\right)\left(a+b+\sum_{(r,s)\in\partial_{j',k'}}c_{r,s}\right)}.$$ 
\end{proposition}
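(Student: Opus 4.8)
The plan is to establish the two assertions in turn, marginal law first and then the correlation, exploiting the conditional-independence structure at each level of the hierarchy in \eqref{eq:prior}. Throughout I would lighten notation by writing $S_{j,k}=\sum_{(r,s)\in\partial_{j,k}}\eta_{r,s}$ and $C_{j,k}=\sum_{(r,s)\in\partial_{j,k}}c_{r,s}$, so that $\theta_{j,k}\mid\bfeta\sim\be(a+S_{j,k},\,b+C_{j,k}-S_{j,k})$.

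For the marginal, the key observation is that since the $\eta_{r,s}$ are conditionally independent $\bin(c_{r,s},\omega)$ given $\omega$, their sum satisfies $S_{j,k}\mid\omega\sim\bin(C_{j,k},\omega)$. I would integrate out $\omega\sim\be(a,b)$ first to get that $S_{j,k}$ is marginally Beta--Binomial with mass $\binom{C_{j,k}}{s}B(a+s,b+C_{j,k}-s)/B(a,b)$. The marginal density of $\theta_{j,k}$ is then $\sum_s \P(S_{j,k}=s)\,\be(\theta\mid a+s,\,b+C_{j,k}-s)$, in which the Beta--Binomial weight cancels exactly the normalising constant of the Beta density, leaving $\frac{\Gamma(a+b)}{\Gamma(a)\Gamma(b)}\theta^{a-1}(1-\theta)^{b-1}\sum_s\binom{C_{j,k}}{s}\theta^{s}(1-\theta)^{C_{j,k}-s}$, where the binomial theorem collapses the sum to $1$. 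A cleaner route, which I would state as an alternative, is that Beta--Binomial conjugacy makes $\omega\mid S_{j,k}\sim\be(a+S_{j,k},\,b+C_{j,k}-S_{j,k})$, precisely the law assigned to $\theta_{j,k}\mid S_{j,k}$; hence $(\theta_{j,k},S_{j,k})$ and $(\omega,S_{j,k})$ share a joint distribution and $\theta_{j,k}\sim\be(a,b)$.

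For the correlation, since all marginals coincide we have $\E(\theta_{j,k})=a/(a+b)$ and the common variance $\V(\theta_{j,k})=ab/\{(a+b)^2(a+b+1)\}$, so $\Cor$ equals $\Cov$ divided by this variance. To get the covariance I would condition on $\bfeta$: the cells are conditionally independent given $\bfeta$, whence $\E(\theta_{j,k}\theta_{j',k'}\mid\bfeta)=\frac{(a+S_{j,k})(a+S_{j',k'})}{(a+b+C_{j,k})(a+b+C_{j',k'})}$, with deterministic denominator, so it remains to evaluate $\E\{(a+S_{j,k})(a+S_{j',k'})\}$. From $\E(\eta_{r,s})=c_{r,s}\E(\omega)$ one gets $\E(S_{j,k})=C_{j,k}\E(\omega)$, and the cross term expands as $\E(S_{j,k}S_{j',k'})=\sum_{(r,s)\in\partial_{j,k}}\sum_{(r',s')\in\partial_{j',k'}}\E(\eta_{r,s}\eta_{r',s'})$. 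For distinct latent indices conditional independence gives $\E(\eta_{r,s}\eta_{r',s'})=c_{r,s}c_{r',s'}\E(\omega^2)$, while coincident indices $(r,s)=(r',s')$ contribute the larger $\E(\eta_{r,s}^2)=c_{r,s}\E(\omega)+(c_{r,s}^2-c_{r,s})\E(\omega^2)$. Splitting the double sum yields $\E(S_{j,k}S_{j',k'})=\E(\omega^2)\,C_{j,k}C_{j',k'}+\{\E(\omega)-\E(\omega^2)\}\sum_{(r,s)\in\partial_{j,k}\cap\partial_{j',k'}}c_{r,s}$. Substituting $\E(\omega)=a/(a+b)$ and $\E(\omega^2)=a(a+1)/\{(a+b)(a+b+1)\}$, forming $\Cov=\E(\theta_{j,k}\theta_{j',k'})-\E(\theta_{j,k})\E(\theta_{j',k'})$, and dividing by the variance, the constant and linear terms cancel, the factors $\E(\omega^2)-\E(\omega)^2$ and $\E(\omega)-\E(\omega^2)$ both reduce to multiples of $ab/\{(a+b)^2(a+b+1)\}$, the variance clears, and the stated formula emerges.

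The decisive step, and the one I expect to demand the most care, is the bookkeeping of the double sum $\E(S_{j,k}S_{j',k'})$: the whole dependence is carried by the distinction between coincident and distinct latent indices, and it is exactly the coincident terms — possible only for indices lying in \emph{both} neighbourhoods — that produce the overlap sum $\sum_{(r,s)\in\partial_{j,k}\cap\partial_{j',k'}}c_{r,s}$ in the numerator. I would be careful to respect the convention that a region neighbours itself when enumerating $\partial_{j,k}$; boundary cells simply have smaller neighbourhoods, which the argument accommodates without modification. The residual steps are routine Beta-moment algebra.
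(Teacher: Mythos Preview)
Your proposal is correct and follows essentially the same approach as the paper: the marginal is obtained by noting $S_{j,k}\mid\omega\sim\bin(C_{j,k},\omega)$, hence $S_{j,k}\sim\bebin(a,b,C_{j,k})$, and then invoking Beta--Binomial conjugacy; the correlation is obtained by conditional expectation together with the marginal result for the variances. The paper's own proof is in fact a terse sketch of exactly these steps, so your proposal simply fills in the bookkeeping the paper omits.
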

\begin{proof}
To obtain the marginal distribution of $\theta_{j,k}$, we note that the conditional distribution of the sum of the latent variables is $\sum_{(r,s)\in\partial_{j,k}}\eta_{r,s}\mid\omega\sim\bin(\sum_{(r,s)\in\partial_{j,k}}c_{r,s},\omega)$, so unconditionally its distribution is a beta-binomial  $\bebin(a,b,\sum_{(r,s)\in\partial_{j,k}}c_{r,s})$. Therefore, due to conjugacy \citep[e.g.][]{bernardo&smith:00}, it follows that the marginal distribution of $\theta_{j,k}$ is beta. For the correlation we use conditional expectation formula twice to obtain the covariance and use the marginal distribution result to obtain the variances. 
\end{proof}

Another important property of our prior is that if $c_{j,k}=c$, that is, common for all $j,k=1,\ldots,m-1$, the latent parameters $\{\eta_{j,k}\}$ are exchangeable, and the joint distribution of $\{\theta_{j,k}\}$ becomes a strictly stationary process. Additionally, if $c_{j,k}=0$ for all $(j,k)$, then the $\theta_{j,k}$'s become all independent. Therefore for $c_{j,k}>0$, the spatial beta process \eqref{eq:prior} defines a prior for parameters in the bounded support $[0,1]$ and with dependence across neighbouring $\theta_{j,k}$'s according to the set $\partial_{j,k}$. This will produce a smoothing effect in the Bayesian estimation of parameters $\btheta$ of copula model \eqref{eq:dens}.

\section{Posterior distributions}
\label{sec:post}

Let $(U_i,V_i)$, $i=1,\ldots,n$ be a bivariate sample of size $n$ from  copula density \eqref{eq:dens}. Then the likelihood function in terms of the $(m-1)^2$ free parameters has the form
\begin{equation}
\nonumber
f(\bu,\bv\mid\btheta)=m^{2n}\theta_{m,m}^{r_{m,m}}
\left\{\prod_{j=1}^{m-1}\theta_{j,m}^{r_{j,m}}\right\}
\left\{\prod_{k=1}^{m-1}\theta_{m,k}^{r_{m,k}}\right\}
\prod_{j=1}^{m-1}\prod_{k=1}^{m-1}\theta_{j,k}^{r_{j,k}}
\end{equation}
where $\btheta\in\Theta$, as in \eqref{eq:cond3}, and the boundary parameters $\theta_{m,m}$, $\theta_{j,m}$ and $\theta_{m,k}$ are given in \eqref{eq:cond2} and $r_{j,k}$, for $j,k=1,\ldots,m$, are defined in \eqref{eq:r}.

We assume the prior distribution for the $\theta_{j,k}$'s is a spatial beta process $\sbe(a,b,\bc)$, given in \eqref{eq:prior}. Therefore the extended prior distribution, similar to a data augmentation technique \citep{tanner:91}, considering the latent variables $\bfeta=\{\eta_{j,k}\}$ and $\omega$, is given by 
\begin{align}
\nonumber
f(\btheta,\bfeta,\omega)=\prod_{j=1}^{m-1}\prod_{k=1}^{m-1}&\{\be(\theta_{j,k}\mid a+\sum_{(r,s)\in\partial_{j,k}}\eta_{r,s}\,,\:b+\sum_{(r,s)\in\partial_{j,k}}(c_{r,s}-\eta_{r,s}) )\\
\nonumber
&\times\bin(\eta_{j,k}\mid c_{j,k}\,,\omega)\}\,\be(\omega\mid a,b).
\end{align}

The posterior distribution of $(\btheta,\bfeta,\omega)$ is given by the product of the likelihood and the prior, up to a proportionality constant. In order to characterise the posterior distribution, we implement a Gibbs sampler \citep{smith&roberts:93} and sample $(\btheta,\bfeta,\omega)$ from the following conditional posterior distributions.
\begin{enumerate}
\item[(i)] Posterior conditional distribution for $\theta_{j,k}$, $j,k=1,\ldots,m-1$
$$\hspace{-8mm}f(\theta_{j,k}\mid\rest)\propto \theta_{j,k}^{a+\sum_{(r,s)\in\partial_{j,k}}\eta_{r,s}+r_{j,k}-1}\left(1-\theta_{j,k}\right)^{b+\sum_{(r,s)\in\partial_{j,k}}(c_{r,s}-\eta_{r,s})-1}\theta_{m,m}^{r_{m,m}}\theta_{j,m}^{r_{j,m}}\theta_{m,k}^{r_{m,k}}I_{\Theta}(\theta_{j,k}).$$
\item[(ii)] Posterior conditional distribution for $\eta_{j,k}$, $j,k=1,\ldots,m-1$
$$f(\eta_{j,k}\mid\rest)\propto\frac{ {c_{j,k}\choose\eta_{j,k}}\left\{\left(\frac{\omega}{1-\omega}\right)\prod_{(t,s)\in\varrho_{j,k}}\left(\frac{\theta_{t,s}}{1-\theta_{t,s}}\right)\right\}^{\eta_{j,k}}I_{\{0,\ldots,c_{j,k}\}}(\eta_{j,k})}
{\prod_{(t,s)\in\varrho_{j,k}}\Gamma\left(a+\sum_{(l,z)\in\partial_{t,s}}\eta_{l,z}\right)\Gamma\left(b+\sum_{(l,z)\in\partial_{t,s}}(c_{l,z}-\eta_{l,z})\right)},$$
where $\varrho_{j,k}$ is the set of reversed neighbours, that is, the set of pairs $(t,s)$ such that $(j,k)\in\partial_{t,s}$.
\item[(iii)] Posterior conditional distribution for $\omega$
$$f(\omega\mid\rest)=\be\left(\omega\,\left|\,a+\sum_{j=1}^{m-1}\sum_{k=1}^{m-1}\eta_{j,k}\,,\,b+\sum_{j=1}^{m-1}\sum_{k=1}^{m-1}(c_{j,k}-\eta_{j,k})\right.\right).$$
\end{enumerate}
Looking at posterior conditional (i) we realise that the sum of latent variables $\sum\eta_{r,s}$ appears in the posterior in the same way as the data $r_{j,k}$. Moreover, since $\eta_{r,s}\in\{0,\ldots,c_{r,s}\}$ and considering that $\partial_{j,k}$ has between three to five elements, to avoid overwhelming the data, it is advised to take values $c_{j,k}\leq\sqrt{n}/5$. 

Sampling from (iii) is straightforward and sampling from (ii) can be easily done by evaluating at the different points of the support and normalizing. However, sampling from (i) is not trivial and requires a  Metropolis-Hastings step \citep{tierney:94}. We suggest sampling $\theta_{j,k}^*$ at iteration $(t+1)$ from a random walk proposal distribution 
$$q(\theta_{j,k}\mid\btheta_{-(j,k)},\theta_{j,k}^{(t)})=\un\left(\theta_{j,k}\mid \max\{l_{j,k},\theta_{j,k}^{(t)}-\delta_{j,k} d_{j,k}\},\min\{u_{j,k},\theta_{j,k}^{(t)}+\delta_{j,k} d_{j,k}\}\right)$$
where the interval $(l_{j,k},u_{j,k})$ represents the conditional support of $\theta_{j,k}$, $d_{j,k}=u_{j,k}-l_{j,k}$ is its length, with 
$$l_{j,k} = \max \left\{0, \frac{m-2}{m} - \sum_{r=1}^{m-1} \sum_{s=1}^{m-1} \theta_{r,s} I\left((r,s)\neq(j,k)\right) \right\}$$
and
$$u_{j,k} = \min \left\{\frac{m-1}{m} - \sum_{r=1}^{m-1} \sum_{s=1}^{m-1} \theta_{r,s}I\left((r,s)\neq(j,k)\right)\,, \frac{1}{m} - \sum_{s=1,s \ne k}^{m-1}\theta_{j,s}\,, \frac{1}{m} - \sum_{r=1,r \ne j}^{m-1} \theta_{r,k} \right\},$$
for $j,k=1,\ldots,m-1$. Therefore, at iteration $(t+1)$ we accept $\theta_{j,k}^*$ with probability 
$$\alpha\left(\theta_{j,k}^*,\theta_{j,k}^{(t)}\right)=\min\left\{1\,,\;\frac{f(\theta_{j,k}^*\mid\rest)\,q(\theta_{j,k}^{(t)}\mid\btheta_{-(j,k)},\theta_{j,k}^{*})}{f(\theta_{j,k}^{(t)}\mid\rest)\,q(\theta_{j,k}^*\mid\btheta_{-(j,k)},\theta_{j,k}^{(t)})}\right\}.$$

The parameters $\delta_{j,k}$ are tuning parameters that control the acceptance rate. As suggested by \cite{roberts&rosenthal:09}, dropping the subindex, $\delta$ parameter can be adapted every certain amount of iterations, inside the MCMC algorithm, to achieve a target acceptance rate. Differing slightly from the proposal in \cite{roberts&rosenthal:09}, instead of considering a single target acceptance rate, we consider the interval $[0.3, 0.4]$ which, according to \cite{robert&casella:10}, define optimal acceptance rates in random walk MH steps. Specifically, our adaptation method uses batches of 50 iterations and for every batch $b$, we compute the acceptance rate $AR^{(b)}$ and define 
\begin{equation}
\label{eq:adapt}
\delta^{(b+1)}=\left\{\begin{array}{ll}
\min\{1,\delta^{(b)}(1.01)^{\sqrt{b}}\} & \mbox{if } AR^{(b)}>0.4 \\
\max\{0.01,\delta^{(b)}(1.01)^{-\sqrt{b}}\} & \mbox{if } AR^{(b)}<0.3
\end{array}\right. 
\end{equation}
For the examples considered here we use in all cases $\delta^{(1)}=0.25$ as starting value. 

This algorithm was implemented in Python. Figure \ref{fig:batch} shows the performance of this adaptive method for the parameter $\theta_{1,1}$ in the real data analysis, with $m=5$ and $c=2$, of Section \ref{sec:numerical}. The left panel shows the values of tuning parameter $\delta_{1,1}$ that stabilises around $0.9$. The right panel shows that the acceptance rate is kept around the target interval $[0.3,0.4]$ as desired. 

Finally, as a general advise, we suggest to take values for $m$ a lot smaller than $\sqrt{n}$, such that there is at least one data point in each partitioning set. 

\section{Numerical analyses}
\label{sec:numerical}

\subsection{Simulation study}

We first assess the performance of our model in a controlled scenario. For this we consider five families of copulas: Product, Gumbel, Clayton, Ali-Mikhail-Haq (AMH) and Normal. From these families we generated samples of size $n = 200$ with parameters, $\theta=1.3$ for the Gumbel, $\theta \in \{-0.3,1\}$ for the Clayton, $\theta \in \{-0.5, 0.7\}$ for the AMH and $\theta \in \{-0.5, 0.5\}$ for the Normal copula. In all but the first two cases, negative/positive parameters induce negative/positive dependence. We use the Spearman's $\rho$ coefficient to characterise the dependence. Since this measure is not available in closed form for all copulas considered, we computed the theoretical value via numerical integration of expression \eqref{eq:tau-rho}. 

For the prior distributions \eqref{eq:prior} we took $a = 0.1$, $b = 0.1$ and a range of values $c_{jk}\in\{0,1,2\}$ to compare among different degrees of prior dependence. For the partition \eqref{eq:partition} we considered two sizes $m\in\{5,8\}$ in such a way that we can compare with the sample copula. We carry out two analysis, one with the original simulated data as it comes from the model, and another with rank transformed data. We implemented an MCMC with the adaptive scheme as described in Section \ref{sec:post}. Chains were ran for 5,000 iterations with a burn-in of 500 and keeping one of every 2nd iteration to produce posterior estimates. Computational times using an Intel core i7 microprocessor average around 65 minutes.

To assess goodness of fit we computed several statistics. The logarithm of the pseudo marginal likelihood (LPML), originally suggested by \cite{geisser&eddy:79}, to assess the fitting of the model to the data. The supremum norm, defined by $\sup_{(u,v)} |C(u,v)-\widehat{C}(u,v)|$ to assess the discrepancy between our posterior estimate (posterior mean) $\widehat{C}(u,v)$ from the true copula $C(u,v)$. We also computed the Spearman's rho coefficient, as in \eqref{eq:rho}, and compare the 95\% interval estimates with the true value. Additionally, as a graphical aid to see the performance of our model, we compare the posterior estimates (posterior mean) of copula densities with the true ones using heat maps.

In Tables \ref{tab:table1} and \ref{tab:table2}, we show the goodness of fit (GOF) statistics with the sampled data as it comes from the models and after applying rank transformation, this latter are indicated with a super index $r$. We note that the LPML statistics are not comparable between original and rank transformed data, however the supremum norms are comparable. We have included the supremum norm for the frequentist sample copula and added the subindex $F$ to differentiate it from that of our Bayesian model that has a subindex $B$. In all cases we observe that the Spearman's rho coefficient $95\%$ interval estimates $\widehat\rho$ contain the true value $\rho$. 

For the Product copula (Table \ref{tab:table1}, first block) the LPML and supremum norm choose the model with $m=5$ and $c=2$ for both, original and rank transformed data. These cases behave similar to the sample copula according to the supremum norm. 
For the Gumbel copula (Table \ref{tab:table1}, second block), there is no agreement between the LPML and the supremum norm, but in any case they both prefer the model with $m=5$ for the rank transformed data case. Interestingly, as in the product copula case, the sample copula obtains a supremum norm slightly smaller than our best Bayesian model, however the Bayesian model with $m=8$ and ranked transformed data obtains a similar supremum norm for $c=0$. 

For the Clayton copula (Table \ref{tab:table1}, third and fourth blocks), with $\theta=-0.3$ and $\theta=1$, the LPML selects the model with $m=5$ and $c=2$, for original and rank transformed data, and in both cases our Bayesian model is superior than the sample copula. We can notice that for $\theta = 1$ the supremum norm is slightly smaller for $m = 8$ than for $m = 5$. 

For the AMH copula (Table \ref{tab:table2}, first and second blocks), with $\theta=-0.5$ and $\theta=0.7$, there is an slight discrepancy between the LPML and the supremum norm. For original data, the LPML chooses the model with $m=5$ and $c=2$, but the supremum norm chooses that with $m=8$ and $c=0$ or $c=1$. For rank transformed data, the best model is that with $m=5$ and $c=2$, for $\theta=-0.5$. In the case of $\theta=0.7$, the LPML selects the model with $m = 5$ and $c=2$, however the sumpremum norm chooses the model with $m = 8$ and $c=0$ or $c=1$. Comparing with the sample copula, our model performs similarly. 

For the normal copula (Table \ref{tab:table2}, third and fourth blocks), something similar to the AMH copula happens. For both values of $\theta$, the LPML prefers the model with $m=5$ and the supremum norm that with $m=8$. In both cases, our best Bayesian model behaves similarly to the sample copula. 

In Figure \ref{fig:heatmaps} we compare the copula density estimates (Bayesian and frequentist) with the true density using heatmaps. For the families shown, product, AMH and normal (across rows), there are some differences between the Bayesian and frequentist (sample copula) estimates. These differences are due to the prior that smooths the intensities by borrowing information from the neighbouring regions. Moreover, in the five families of copulas studied here, none of the GOF statistics select the prior independence case of $c=0$, which confirms the benefit of the prior dependence in the $\theta_{j,k}$'s.

\subsection{Real data analysis}

In the section we show the performance of our model to estimate the dependence between variables in a real life application where data is not obtained directly from the copula but from some arbitrary unknown distribution. 

In Mexico, the pension system is conformed by ten pension fund managers denominated AFOREs (Spanish acronym for \emph{Administradoras de Fondos para el Retiro}), each of these fund managers work with ten investment funds, based on the group age of the worker. On a monthly basis, the National Commission for the Pension System (CONSAR), publishes statistical information and risk metrics that describe the performance of these pension funds in an open data platform, that can be accedes at \linebreak
\texttt{https://www.consar.gob.mx/gobmx/aplicativo/siset/Enlace.aspx}.

The information provided by CONSAR allows workers to choose the AFORE that can provide them with the best benefits in their retirement. Because of its importance, we consider two of these statistics: Net Return Indicator (IRN), which is an indicator of the average of the short, medium and long-term returns offered by a investment fund, above the cost of a life annuity, minus the applicable commissions, and reflects the past performance obtained by the investments in each fund; and the tracking error (ES), an indicator that shows the average difference between the actual investment path fund and the optimal  glide path. 

In general, it is considered that these two variables, IRN and ES, maintain a positive dependency relationship, that is, a higher return may present a higher error (risk). We use our semiparametric copula model to verify this assumption and quantify the possible degree of dependence. If these two variables were independent or negative dependent, workers might be able to freely chose the AFORE that maximises the IRN without incurring in any risk. Available data consists of $n=100$ observations of variables IRN and ES in December of 2021. As a first step we apply the rank transformation given in \eqref{eq:rank} to the original data. In Figure \ref{fig:realdata} we show scatter plots of the original data (upper left panel), and the rank transformed data (upper right panel). Note that the scale of the data changes, but the main features of dependence are maintained.

To define the prior distribution we took the same definitions indicated in the simulations of the previous section, $a=0.1$, $b=0.1$, $c_{j,k} \in \{0,1,2\}$, except for $m$, this parameter is considered to take values $m\in\{4,5\}$, due to the reduction in the number of observations in the sample with respect to the sample size of the previous section. We consider the same MCMC specifications as those used for the simulation study. Our posterior sampling procedure behaves well with good convergence of the chains and the adaptation reaches the desire target. Computational times using an Intel core i7 microprocessor average around 30 minutes.

In Table \ref{tab:realdata} we report some GOF measures, say the Spearman's rho estimate and the LPML. According to the LPML the values $m=5$ and $c=2$ are preferred. The 95\% credible interval estimate of the Spearman's rho is $(0.003,0.282)$, which confirms that the association is positive. For the reference, the sample Spearman's rho takes the value of $0.177$, however there is no way of knowing if this value is significantly positive. Our model confirms that it is. These estimates suggest that there is a positive (weak) dependence between the return and the risk in an investment fund. Therefore, workers must pay attention at the IRN indicator as well as the ES in order to choose a fund manager. 
Finally, in Figure \ref{fig:realdata}, we also show the Bayesian estimators for the copula density as a heatmap (bottom left panel) and for the copula CDF as a perspective plot (bottom right panel). In the heatmap we can appreciate slightly more intense colors in the $45$ degrees diagonal, which confirms the existence of a positive dependence.

\section{Concluding remarks}

We proposed a semiparametric copula model that is flexible enough to approximate the dependence between any two random variables. Maximum likelihood estimators of our model coincide with the sample copula of \cite{gonzalez&hoyos:18} under certain conditions such as rank transformation of the data and defining an $m$ that divides $n$. However, our model is more general and due to the Bayesian analysis, we can produce better estimates by borrowing strength among grid neighbours through the prior distribution. 

Computational times reported are not dependent on the sample size $n$, but they are related to the partition size $m$. The number of parameters and latent parameters to sample from, in the MCMC algorithm, is $2(m-1)^2+1$. We advise to take $m<\min\{\sqrt{n},10\}$ to obtain results with a reasonable amount of time. 

Along this paper we concentrated in the bivariate copula, however the extension to a $d$-dimensional copula can also be considered. For instance, if we consider a partition of size $m^d$, $2\leq m\leq n$, of $[0,1]^d$ such that $Q_{j_1,\ldots,j_d}=\times_{k=1}^d \left(\frac{j_k-1}{m},\frac{j_k}{m}\right]$ for $j_k=1,\ldots,m$ and $k=1,\ldots,d$, then a semiparametric $d$-copula density would be $$f_C(u_1,\ldots,u_d\mid\btheta)=m^d\sum_{j_1=1}^m\cdots\sum_{j_d=1}^m \theta_{j_1,\ldots,j_d}I((u_1,\ldots,u_d)\in Q_{j_1,\ldots,j_d}),$$ where $\btheta=\{\theta_{j_1,\ldots,j_d},j_1,\ldots,j_d=1,\ldots,m\}$ are the set of model parameters that satisfy $\sum_{j_k=1}^m\theta_{j_1,\ldots,j_d}=1$ for all $k=1,\ldots,d$ and $\sum_{j_1}^m\cdots\sum_{j_k=1}^m\theta_{j_1,\ldots,j_d}=1$. Extending the prior to this $d$-dimensional setting is also possible. Performance of our semiparametric copula model in this multivariate setting is worth studying.

\section*{Acknowledgement}

The first author acknowledges support from \textit{Asociaci\'on Mexicana de Cultura, A.C.}

\bibliographystyle{natbib}

\newpage

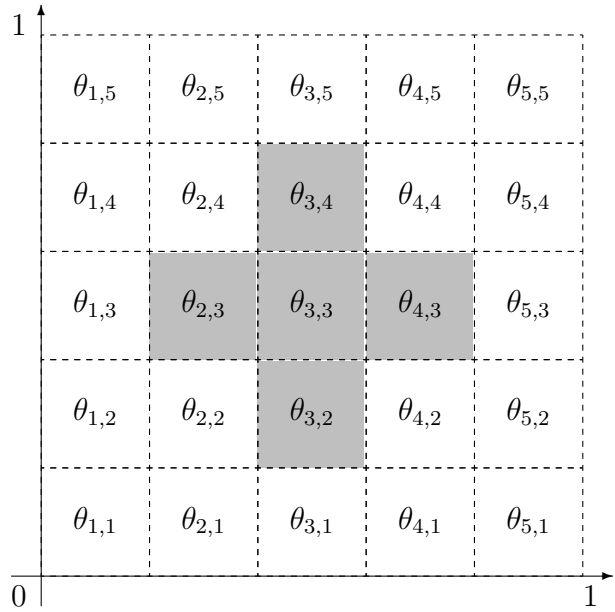
\begin{figure}[H]
\setlength{\unitlength}{0.8cm}
\begin{center}
\begin{picture}(10,10)
\put(0,0.5){\vector(1,0){10}}
\put(0.5,0){\vector(0,1){10}}
\put(0,0){$0$}
\put(9.5,0){$1$}
\put(0,9.5){$1$}
\put(4.1,4.24){\colorbox{lightgray}{\makebox(1.5,1.5){}}}
\put(4.1,6.04){\colorbox{lightgray}{\makebox(1.5,1.5){}}}
\put(5.9,4.24){\colorbox{lightgray}{\makebox(1.5,1.5){}}}
\put(4.1,2.44){\colorbox{lightgray}{\makebox(1.5,1.5){}}}
\put(2.3,4.24){\colorbox{lightgray}{\makebox(1.5,1.5){}}}
\put(0.5,0.5){\dashbox{0.1}(1.8,1.8){$\theta_{1,1}$}}
\put(2.3,0.5){\dashbox{0.1}(1.8,1.8){$\theta_{2,1}$}}
\put(4.1,0.5){\dashbox{0.1}(1.8,1.8){$\theta_{3,1}$}}
\put(5.9,0.5){\dashbox{0.1}(1.8,1.8){$\theta_{4,1}$}}
\put(7.7,0.5){\dashbox{0.1}(1.8,1.8){$\theta_{5,1}$}}
\put(0.5,2.3){\dashbox{0.1}(1.8,1.8){$\theta_{1,2}$}}
\put(2.3,2.3){\dashbox{0.1}(1.8,1.8){$\theta_{2,2}$}}
\put(4.1,2.3){\dashbox{0.1}(1.8,1.8){$\theta_{3,2}$}}
\put(5.9,2.3){\dashbox{0.1}(1.8,1.8){$\theta_{4,2}$}}
\put(7.7,2.3){\dashbox{0.1}(1.8,1.8){$\theta_{5,2}$}}
\put(0.5,4.1){\dashbox{0.1}(1.8,1.8){$\theta_{1,3}$}}
\put(2.3,4.1){\dashbox{0.1}(1.8,1.8){$\theta_{2,3}$}}
\put(4.1,4.1){\dashbox{0.1}(1.8,1.8){$\theta_{3,3}$}}
\put(5.9,4.1){\dashbox{0.1}(1.8,1.8){$\theta_{4,3}$}}
\put(7.7,4.1){\dashbox{0.1}(1.8,1.8){$\theta_{5,3}$}}
\put(0.5,5.9){\dashbox{0.1}(1.8,1.8){$\theta_{1,4}$}}
\put(2.3,5.9){\dashbox{0.1}(1.8,1.8){$\theta_{2,4}$}}
\put(4.1,5.9){\dashbox{0.1}(1.8,1.8){$\theta_{3,4}$}}
\put(5.9,5.9){\dashbox{0.1}(1.8,1.8){$\theta_{4,4}$}}
\put(7.7,5.9){\dashbox{0.1}(1.8,1.8){$\theta_{5,4}$}}
\put(0.5,7.7){\dashbox{0.1}(1.8,1.8){$\theta_{1,5}$}}
\put(2.3,7.7){\dashbox{0.1}(1.8,1.8){$\theta_{2,5}$}}
\put(4.1,7.7){\dashbox{0.1}(1.8,1.8){$\theta_{3,5}$}}
\put(5.9,7.7){\dashbox{0.1}(1.8,1.8){$\theta_{4,5}$}}
\put(7.7,7.7){\dashbox{0.1}(1.8,1.8){$\theta_{5,5}$}}
\end{picture}
\end{center}
\caption{Graphical representation of unit square partition with $m=5$. Neighbouring regions of location $(3,3)$ are painted in gey.}
\label{fig:part}
\end{figure}

\begin{figure}[H]
\begin{center}
\includegraphics[width=6.5cm,height=5.5cm]{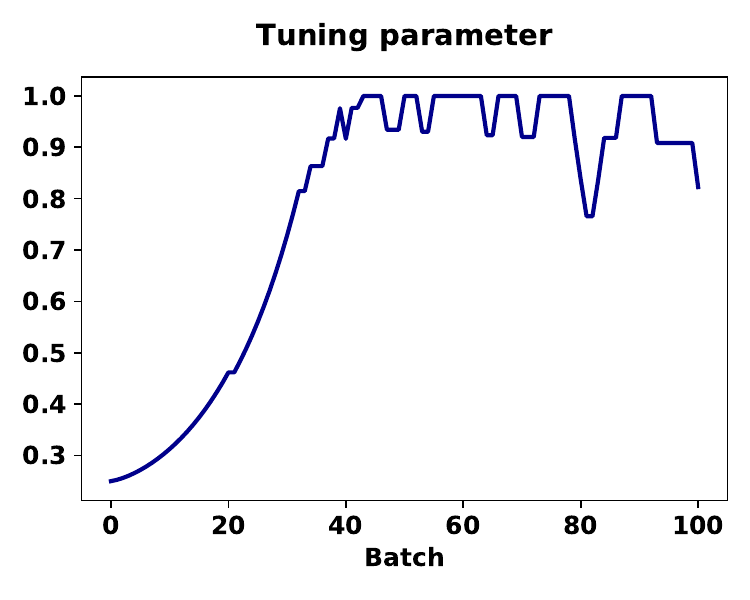} \hspace{1cm}
\includegraphics[width=6.5cm,height=5.5cm]{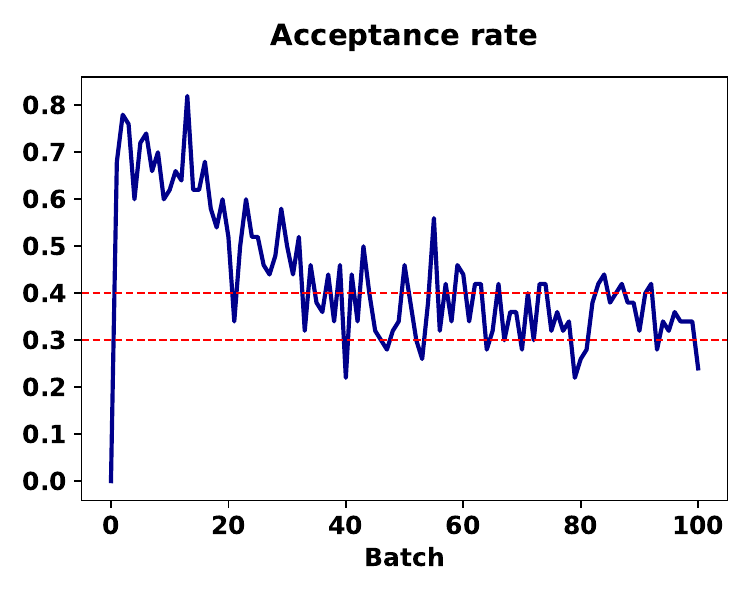}
\caption{Example of acceptance rate and tuning parameter in the adaptation method with a batch every 50 iterations.}
\label{fig:batch}
\end{center}
\end{figure}

\begin{figure}[H]
\begin{center}
\includegraphics[width=5.5cm,height=4.5cm]{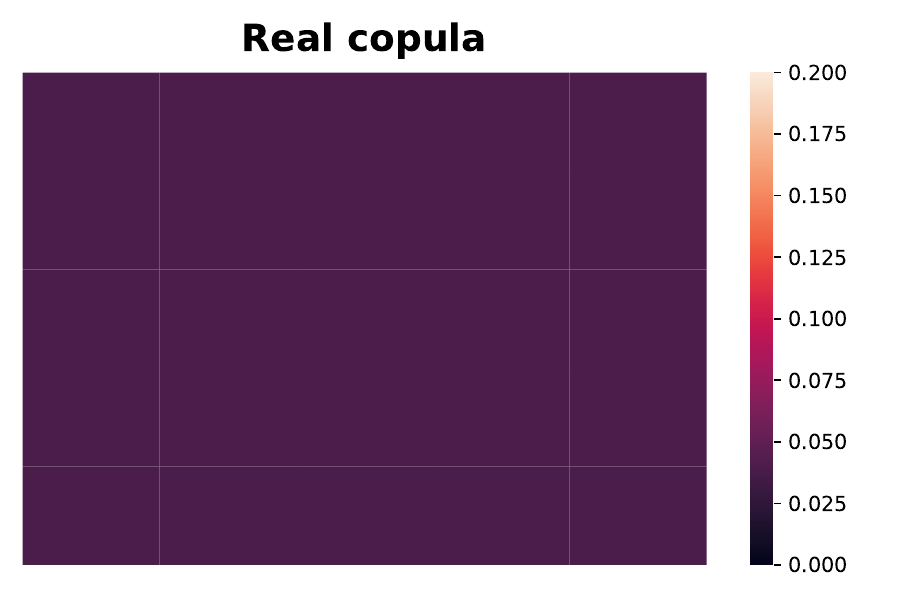}\hspace{-3mm}
\includegraphics[width=5.5cm,height=4.5cm]{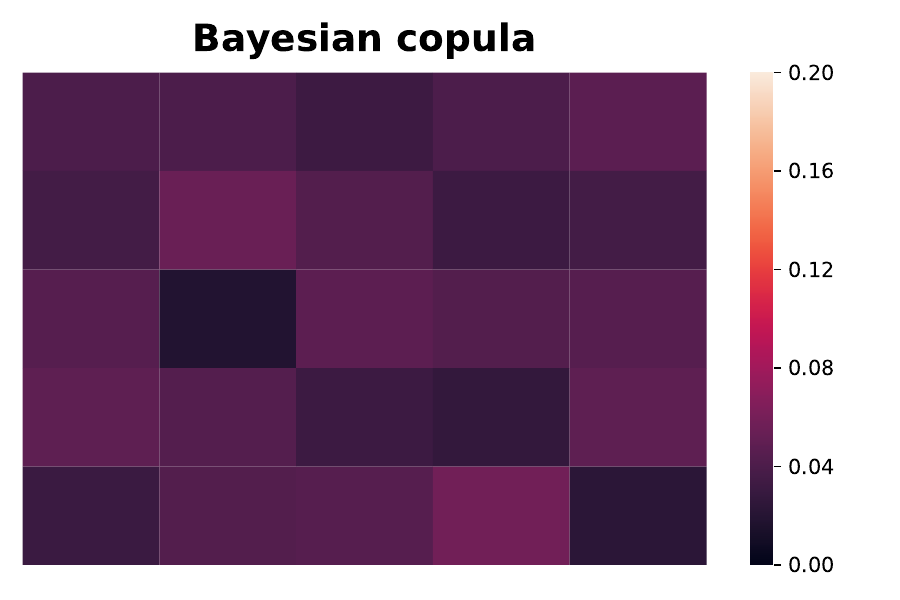}\hspace{-3mm}
\includegraphics[width=5.5cm,height=4.5cm]{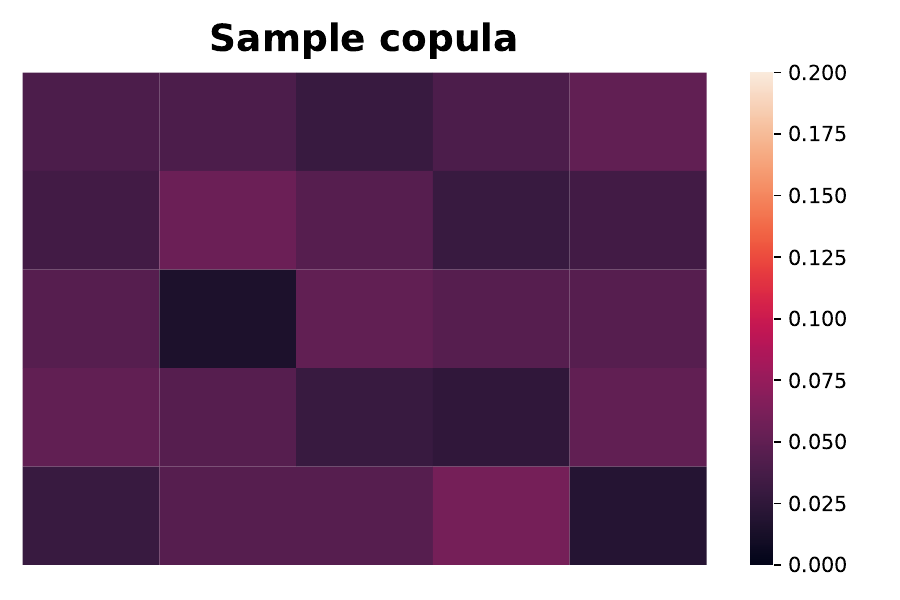}
\includegraphics[width=5.5cm,height=4.5cm]{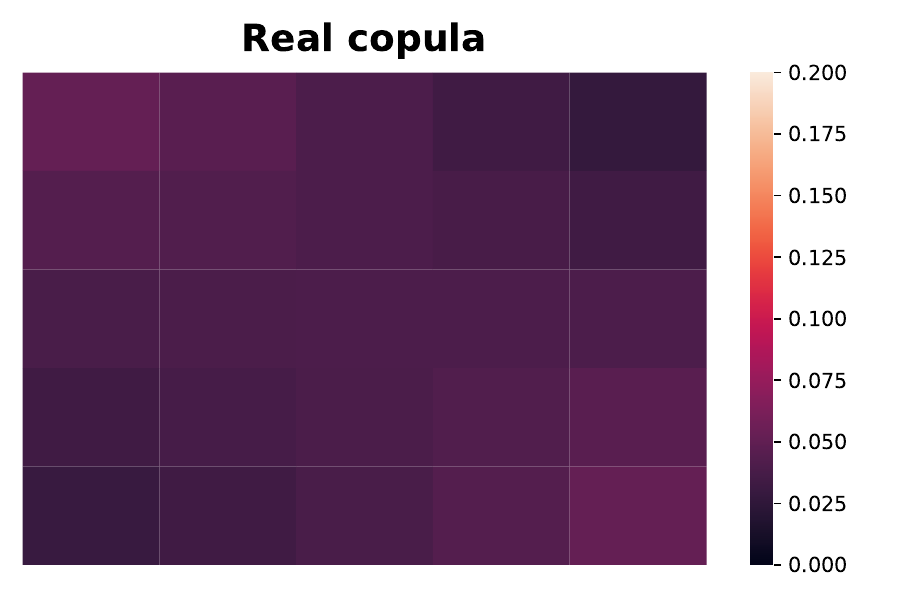}\hspace{-3mm}
\includegraphics[width=5.5cm,height=4.5cm]{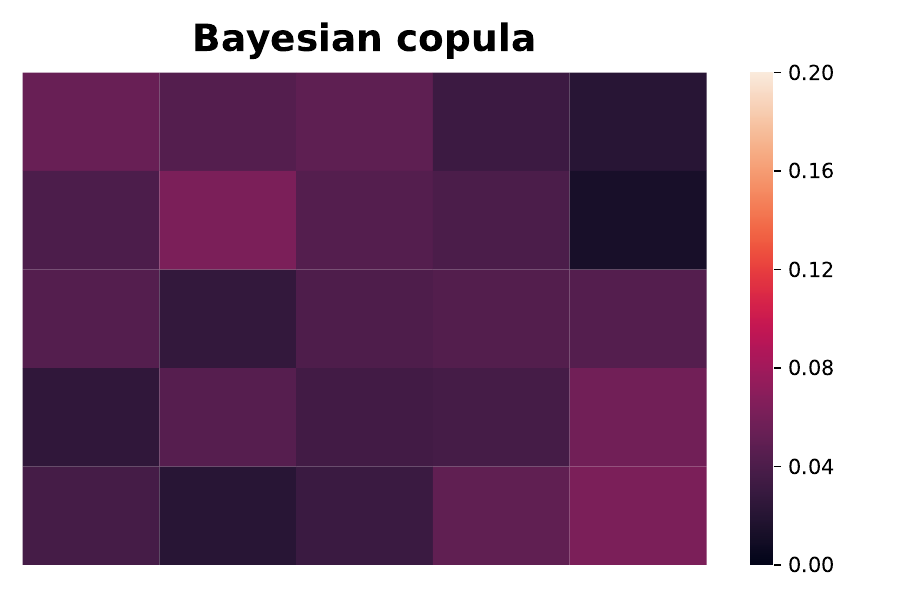}\hspace{-3mm}
\includegraphics[width=5.5cm,height=4.5cm]{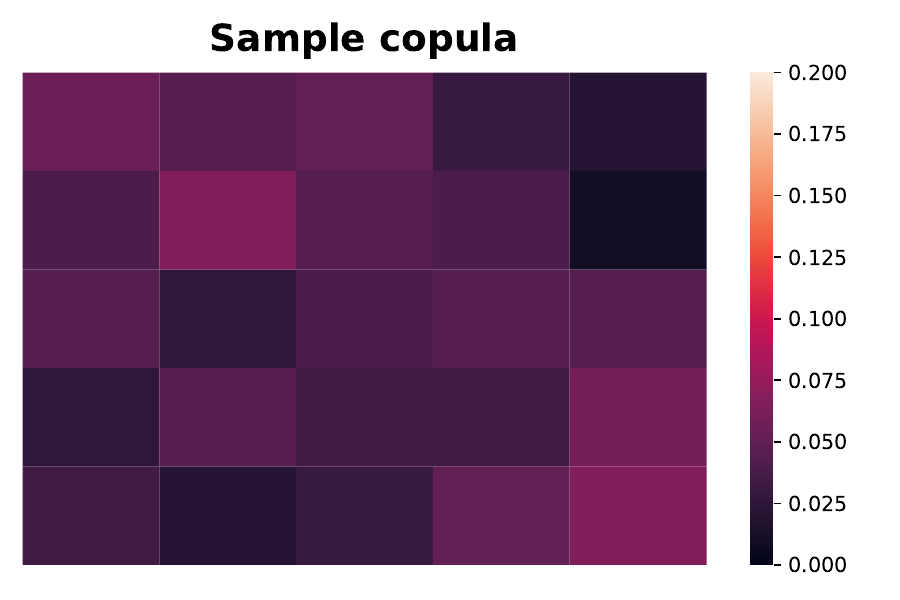}
\includegraphics[width=5.5cm,height=4.5cm]{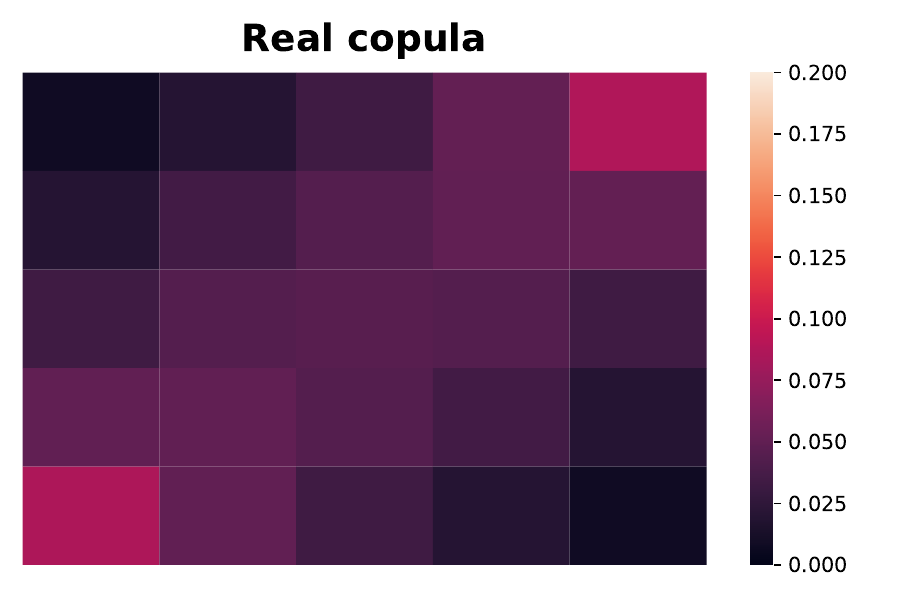}\hspace{-3mm}
\includegraphics[width=5.5cm,height=4.5cm]{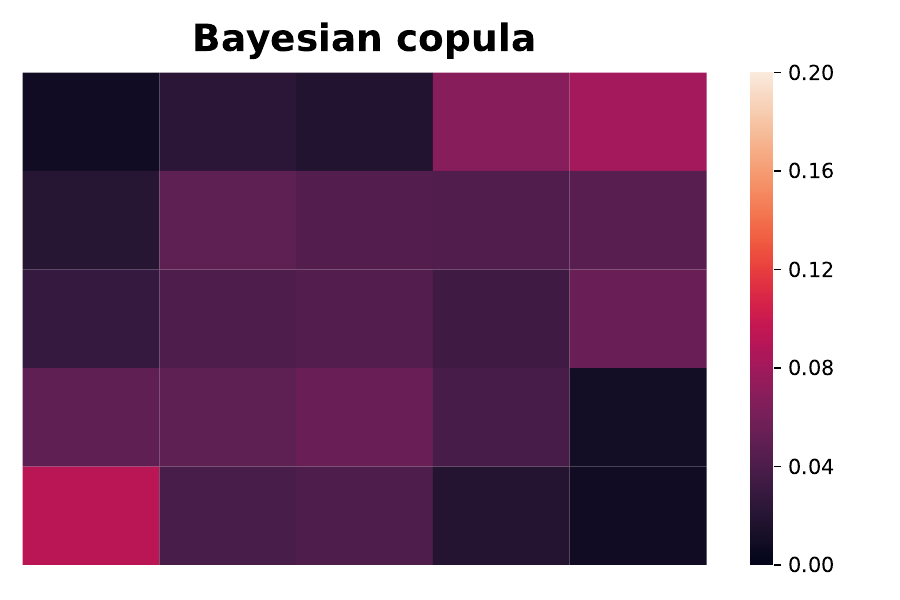}\hspace{-3mm}
\includegraphics[width=5.5cm,height=4.5cm]{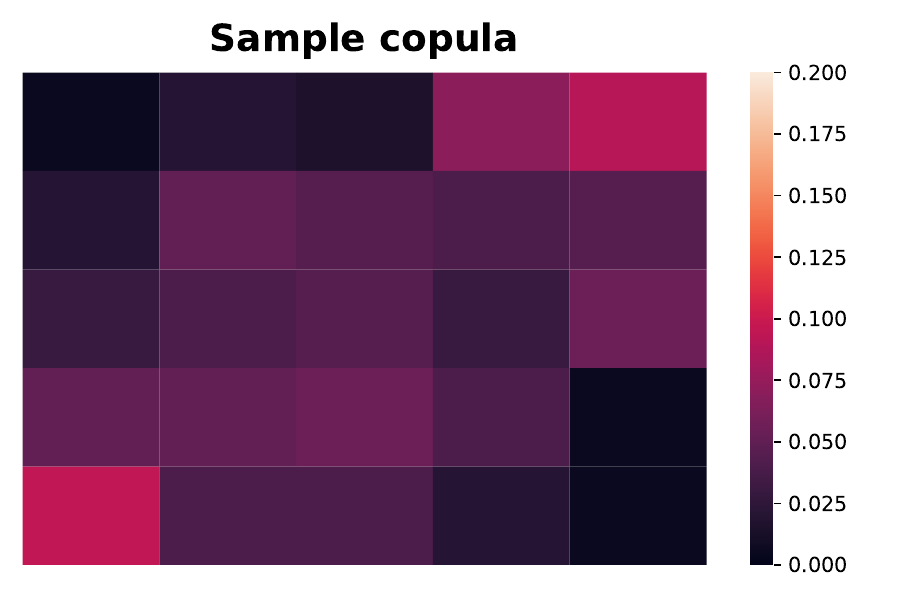}
\caption{Copula density estimation (heat maps) for some copulas based on rank transformed data. Product (top row), AMH with $\theta=-0.5$ (middle row), and normal with $\theta=0.5$ (bottom row). Real copula (first column), bayesian estimation with $m = 5$ and $c = 2$ (second column) and frequentist sample copula (third column).}
\label{fig:heatmaps}
\end{center}
\end{figure}

\begin{figure}
\begin{center}
\includegraphics[width=5.5cm,height=4.5cm]{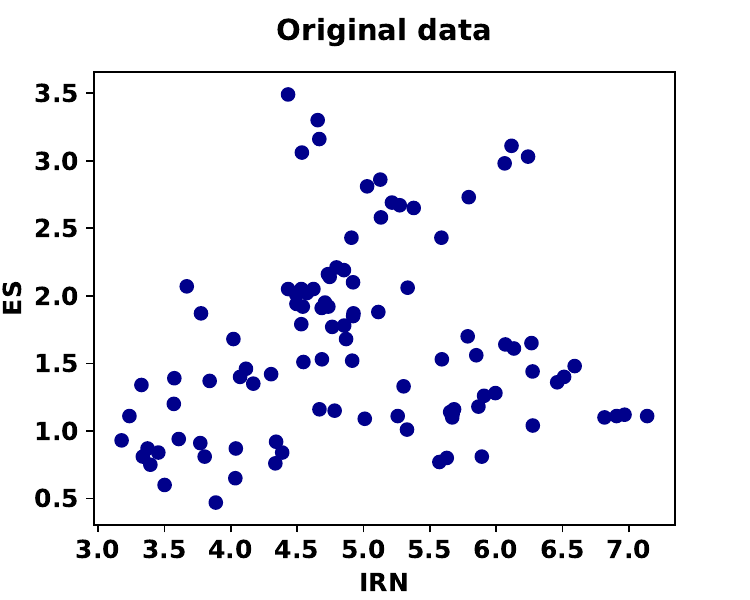} \hspace{1cm}
\includegraphics[width=5.5cm,height=4.5cm]{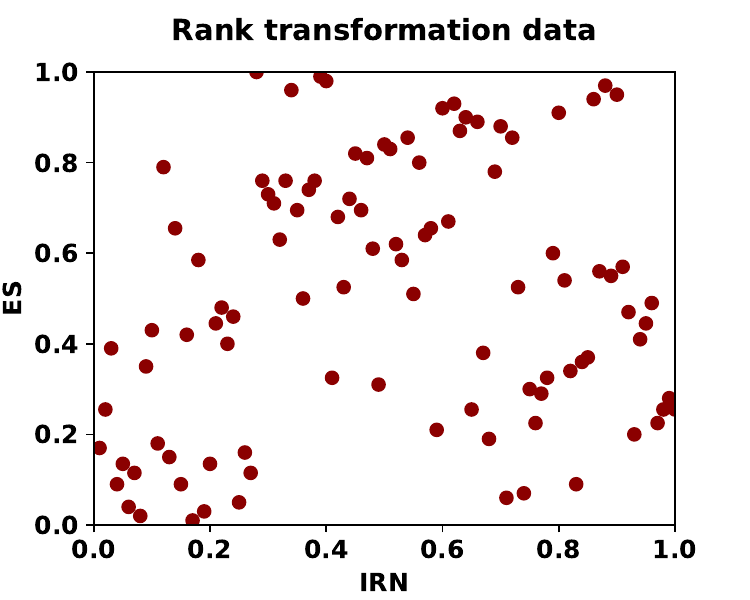} \\[5mm]
\includegraphics[width=5.5cm,height=4.5cm]{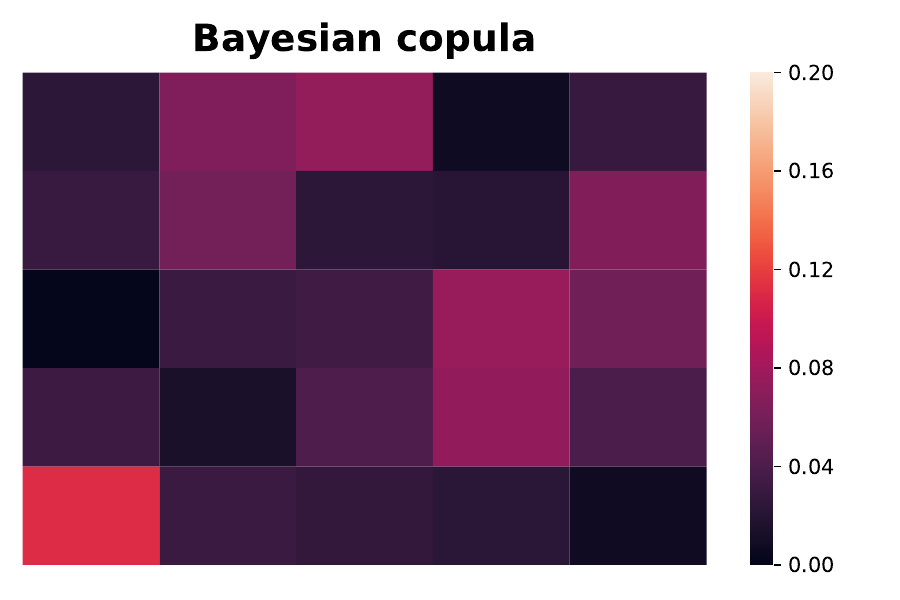} \hspace{0.5cm}
\includegraphics[width=5.5cm,height=4.5cm]{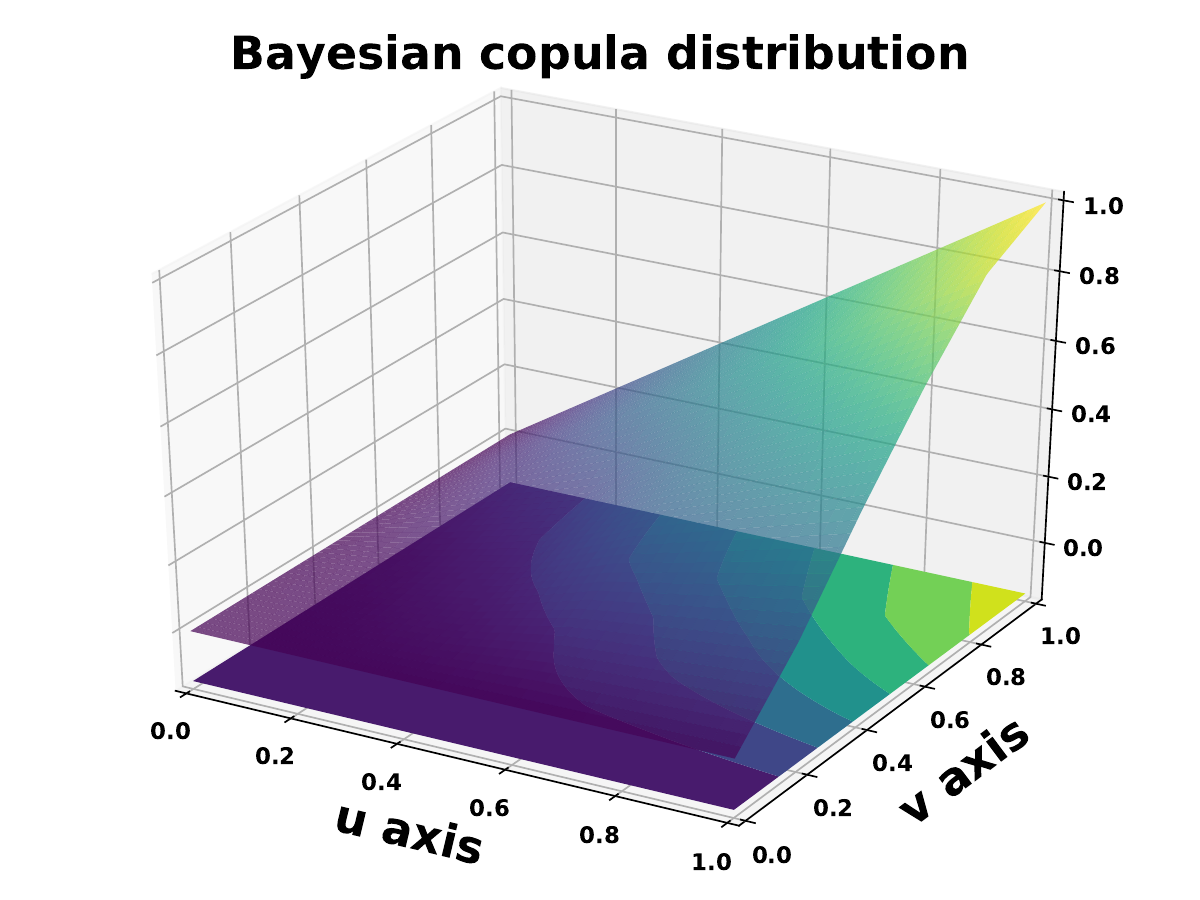}
\caption{Real data. Scatter plots (top row) of original data (left) and rank transformed data (right); Bayesian copula estimators (bottom), density (left) and CDF (right) obtained with $m=5$ and $c=2$.}
\label{fig:realdata}
\end{center}
\end{figure}

\begin{table}
\caption{GOF measures for original and rank transformed data, for different copulas.}
\label{tab:table1}
{\small
\begin{center}
\begin{tabular}{cc|ccc|cccc}
\hline \hline
$m$ & $c$ & $\widehat\rho$ & LPML & SN$_B$ & $\widehat\rho^r$ & LPML$^r$ & SN$_B^r$ & SN$_F^r$\\ \hline
\multicolumn{9}{c}{Product copula, $\rho=0$} \\ \hline
5 & 0 & (-0.140,0.106) & -0.030 & 0.042 & (-0.118,0.128) & -0.039 & 0.038 & 0.029 \\
5 & 1 & (-0.139,0.100) & -0.029 & 0.041 & (-0.105,0.127) & -0.037 & 0.037 & 0.029 \\ 
5 & 2 & (-0.131,0.107) & -0.027 & 0.040 & (-0.105,0.127) & -0.036 & 0.037 & 0.029 \\
8 & 0 & (-0.099,0.147) & -0.134 & 0.044 & (-0.137,0.109) & -0.151 & 0.043 & 0.037 \\
8 & 1 & (-0.097,0.145) & -0.145 & 0.045 & (-0.137,0.118) & -0.173 & 0.043 & 0.037 \\ 
8 & 2 & (-0.097,0.142) & -0.136 & 0.046 & (-0.137,0.118) & -0.159 & 0.042 & 0.042 \\ 
\hline
\multicolumn{9}{c}{Gumbel copula with $\theta=1.3$, $\rho=0.33$} \\ \hline 
5 & 0 & (0.199,0.415) & 0.044 & 0.066 & (0.200,0.430) & 0.044 & 0.067 & 0.058 \\
5 & 1 & (0.195,0.420) & 0.045 & 0.066 & (0.212,0.435) & 0.044 & 0.066 & 0.058 \\ 
5 & 2 & (0.199,0.423) & 0.048 & 0.065 & (0.206,0.421) & 0.050 & 0.066 & 0.058 \\ 
8 & 0 & (0.191,0.416) & -0.053 & 0.059 & (0.204,0.419) & -0.056 & 0.059 & 0.059 \\
8 & 1 & (0.195,0.407) & -0.050 & 0.059 & (0.196,0.416) & -0.040 & 0.057 & 0.059 \\ 
8 & 2 & (0.203,0.419) & -0.042 & 0.057 & (0.221,0.426) & -0.042 & 0.058 & 0.059 \\ 
\hline
\multicolumn{9}{c}{Clayton copula with $\theta=-0.3$, $\rho=-0.26$} \\ \hline 
5 & 0 & (-0.297,-0.064) & -0.025 & 0.064 & (-0.334,-0.117) & 0.010 & 0.085 & 0.090 \\ 
5 & 1 & (-0.300,-0.066) & -0.026 & 0.064 & (-0.329,-0.110) & 0.008 & 0.085 & 0.090 \\ 
5 & 2 & (-0.296,-0.061) & -0.017 & 0.063 & (-0.324,-0.117) & 0.014 & 0.084 & 0.090 \\
8 & 0 & (-0.319,-0.105) & -0.051 & 0.081 & (-0.340,-0.119) & -0.117 & 0.077 & 0.078 \\  
8 & 1 & (-0.325,-0.112) & -0.051 & 0.091 & (-0.326,-0.125) & -0.113 & 0.079 & 0.078\\ 
8 & 2 & (-0.329,-0.110) & -0.045 & 0.084 & (-0.328,-0.120) & -0.109 & 0.077 & 0.078 \\
\hline
\multicolumn{9}{c}{Clayton copula with $\theta=1$, $\rho=0.47$} \\ \hline 
5 & 0 & (0.266,0.480) & 0.063 & 0.088 & (0.272,0.485) & 0.052 & 0.091 & 0.094 \\
5 & 1 & (0.279,0.488) & 0.063 & 0.090 & (0.283,0.489) & 0.052 & 0.093 & 0.094 \\ 
5 & 2 & (0.274,0.480) & 0.066 & 0.087 & (0.273,0.487) & 0.054 & 0.089 & 0.094 \\ 
8 & 0 & (0.286,0.502) & -0.009 & 0.067 & (0.281,0.484) & -0.018 & 0.078 & 0.068 \\
8 & 1 & (0.302,0.502) & -0.009 & 0.067 & (0.278,0.480) & -0.018 & 0.078 & 0.068 \\ 
8 & 2 & (0.301,0.509) & -0.009 & 0.067 & (0.279,0.487) & -0.022 & 0.075 & 0.068 \\ 
\hline \hline
\end{tabular}
\end{center}}
\end{table}

\begin{table}
\caption{GOF measures for original and rank transformed data, for different copulas.}
\label{tab:table2}
{\small
\begin{center}
\begin{tabular}{cc|ccc|cccc}
\hline \hline
$m$ & $c$ & $\widehat\rho$ & LPML & SN$_B$ & $\widehat\rho^r$ & LPML$^r$ & SN$_B^r$ & SN$_F^r$\\ \hline
\multicolumn{9}{c}{AMH copula with $\theta=-0.5$, $\rho=-0.15$} \\ \hline
5 & 0 & (-0.368,-0.125) & 0.006 & 0.071 & (-0.347,-0.088) & -0.023 & 0.057 & 0.048 \\ 
5 & 1 & (-0.370,-0.133) & 0.004 & 0.071 & (-0.355,-0.102) & -0.019 & 0.057 & 0.048 \\ 
5 & 2 & (-0.367,-0.137) & 0.009 & 0.069 & (-0.336,-0.097) & -0.015 & 0.056 & 0.048 \\ 
8 & 0 & (-0.343,-0.130) & -0.076 & 0.048 & (-0.315,-0.094) & -0.058 & 0.049 & 0.050 \\ 
8 & 1 & (-0.344,-0.129) & -0.077 & 0.048 & (-0.324,-0.081) & -0.060 & 0.049 & 0.050 \\ 
8 & 2 & (-0.364,-0.123) & 0.070 & 0.050 & (-0.314,-0.087) & -0.051 & 0.047 & 0.050 \\
\hline
\multicolumn{9}{c}{AMH copula with $\theta=0.7$, $\rho=0.28$} \\ \hline 
5 & 0 & (0.190,0.432) & 0.023 & 0.070 & (0.200,0.416) & 0.032 & 0.074 & 0.064 \\ 
5 & 1 & (0.213,0.442) & 0.024 & 0.071 & (0.204,0.432) & 0.032 & 0.073 & 0.064 \\ 
5 & 2 & (0.209,0.439) & 0.029 & 0.069 & (0.204,0.424) & 0.039 & 0.071 & 0.064 \\ 
8 & 0 & (0.174,0.402) & -0.078 & 0.062 & (0.205,0.417) & -0.076 & 0.056 & 0.047 \\ 
8 & 1 & (0.175,0.404) & -0.080 & 0.062 & (0.199,0.415) & -0.082 & 0.056 & 0.047 \\ 
8 & 2 & (0.190,0.414) & -0.074 & 0.062 & (0.214,0.424) & -0.087 & 0.057 & 0.047 \\ 
\hline
\multicolumn{9}{c}{Normal copula with $\theta=-0.5$, $\rho=-0.48$} \\ \hline 
5 & 0 & (-0.517,-0.330) & 0.106 &  0.109 & (-0.541,-0.352) & 0.095 & 0.101 & 0.095 \\ 
5 & 1 & (-0.506,-0.324) & 0.103 & -0.111 & (-0.538,-0.343) & 0.093 & 0.101 & 0.095 \\ 
5 & 2 & (-0.517,-0.324) & 0.106 & 0.109 & (-0.530,-0.349) & 0.100 & 0.100 & 0.095 \\ 
8 & 0 & (-0.507,-0.333) & -0.044 & 0.097 & (-0.525,-0.336) & 0.005 & 0.092 & 0.085 \\ 
8 & 1 & (-0.491,-0.303) & -0.045 & 0.092 & (-0.539,-0.332) & -0.001 & 0.093 & 0.085 \\ 
8 & 2 & (-0.507,-0.058) & -0.039 & 0.096 & (-0.521,-0.332) & 0.014 & 0.090 & 0.085 \\ 
\hline
\multicolumn{9}{c}{Normal copula with $\theta=0.5$, $\rho=0.48$} \\ \hline 
5 & 0 & (0.356,0.554) & 0.105 & 0.085 & (0.341,0.527) & 0.092 & 0.088 & 0.088 \\ 
5 & 1 & (0.361,0.557) & 0.105 & 0.086 & (0.332,0.539) & 0.090 & 0.088 & 0.088 \\ 
5 & 2 & (0.359,0.557) & 0.108 & 0.085 & (0.323,0.524) & 0.093 & 0.085 & 0.088 \\
8 & 0 & (0.348,0.550) & 0.075 & 0.084 & (0.359,0.544) & 0.030 & 0.082 & 0.076 \\  
8 & 1 & (0.375,0.557) & 0.078 & 0.082 & (0.361,0.535) & 0.023 & 0.086 & 0.076 \\ 
8 & 2 & (0.353,0.562) & 0.074 & 0.083 & (0.355,0.542) & 0.037 & 0.083 & 0.076 \\ 
\hline \hline
\end{tabular}
\end{center}}
\end{table}

\begin{table}
\caption{Real data: GOF measures for rank transformed data.}
\label{tab:realdata}
\begin{center}
\begin{tabular}{cc|cc}
\hline\hline
$m$ & $c$ & $\widehat\rho$ & LPML \\ \hline
4 & 0 & (0.038,0.313) & 0.098 \\ 
4 & 1 & (0.028,0.307) & 0.105 \\ 
4 & 2 & (0.019,0.302) & 0.107  \\ 
\hline
5 & 0 & (0.018,0.275) & 0.125  \\ 
5 & 1 & (0.001,0.274) & 0.118  \\ 
5 & 2 & (0.003,0.282) & 0.132  \\ 
\hline\hline
\end{tabular}
\end{center}
\end{table}

\end{document}